\newtheorem{lemma}{Lemma}
\newtheorem{theorem}{Theorem}
\newtheorem{remark}{Remark}
\newtheorem{problem}{Problem}
\begin{document}

\title{Effective Capacity and Power Allocation for Machine-Type Communication}
\author{
	\IEEEauthorblockN{Mohammad Shehab, Hirley Alves, and Matti Latva-aho}
	\thanks{		
	M. Shehab, H. Alves, and M. Latva-aho are with Centre for Wireless Communications (CWC), University of Oulu, Finland
		Email: firstname.lastname@oulu.fi} 
	\thanks{This work is partially supported by Academy of Finland 6Genesis Flagship (Grant no. 318927), Aka Project EE-IoT (Grant no. 319008), and by Finnish Funding Agency for Technology and Innovation (Tekes), Bittium Wireless, Keysight Technologies Finland, Kyynel, MediaTek Wireless, and Nokia Solutions and Networks.}
}
%
\maketitle

\begin{abstract}
	Effective capacity (EC) determines the maximum communication rate subject to a particular delay constraint. In this work, we analyze the EC of ultra reliable Machine Type Communication (MTC) networks operating in the finite blocklength (FB) regime. First, we present a closed form approximation for EC in quasi-static Rayleigh fading channels. Our analysis determines the upper bounds for EC and delay constraint when varying transmission power. Finally, we characterize the power-delay trade-off for fixed EC and propose an optimum power allocation scheme which exploits the asymptotic behavior of EC in the high SNR regime. The results illustrate that the proposed scheme provides significant power saving with a negligible loss in EC. 
\end{abstract}

\begin{IEEEkeywords}
Effective capacity, finite blocklength, ultra reliable communication, optimal power allocation. 
\end{IEEEkeywords}

\vspace{-0mm}
\section{Introduction}\label{introduction}
Communication systems have become everyday use equipments everywhere around us. In these systems, information is commonly conveyed in the form of data bits which are then transformed to coded packets. Packets are then transmitted in noisy mediums which are affected by fading. For a certain communication channel, Shannon capacity determines the attainable rate by which information can be transmitted with almost no error. Conventionally, communication systems are designed based on Shannon theory, which resorts to the transmission of relatively long data packages when there is a large number of channel uses per packet. Machine type communication (MTC) systems ranging from sensor to vehicular networks often have strict delay constraints, where packets are relatively short and required to be transmitted at minimum latency and a high level of reliability (i.e, >99.99$\%$). This is not merely achieved via conventional coding with long blocklength. Meanwhile, ultra reliable communication (URC) has evolved to propose solutions for reliable and low latency communication. The next generations of mobile communication are expected to support such demands via MTC \cite{paper1,paper2,TVT3}. 

To achieve minimum latency and ultra reliability as envisioned for real time applications and emerging technologies such as e-health and road safety, these networks communicate on short messages. Transmission of short packets does not comply to Shannon capacity, which becomes a poor performance metric at finite blocklength as pointed out in \cite {paper2}. Communication in the finite blocklength regime has gained an increasing attention in the recent years \cite {paper2,Yang_J, paper5,Yang2014c,TVT1,TVT2}, especially after the seminal work in \cite{Polyanskiy2010}, where coding rates of short packets are defined for the additive white Gaussian noise (AWGN) channel. Later, in \cite{Yang2014c} the authors determined the maximum communication rate as a function of blocklength and error outage probability in quasi-static fading. The results highlighted a constant gap between their achievability bound and practical coding schemes which are implemented in current standards. However, the gains in latency due to short messages come at a cost of reliability as discussed in \cite{paper2}. 

Effective capacity (EC) is defined as a measure of the highest arrival rate which can be served by the network under particular latency constraint. It is a metric which was first introduced in \cite {paper6} to capture the physical and link layers characteristics by insuring specific quality of service guarantees. Thus, it allows us to investigate further the latency-reliability trade-off. In \cite{TVT1} and \cite{TVT2}, finite blocklength performance of cooperative and relay-assisted networks was discussed but without considering power allocation or latency aspects. In our work, we resort to the EC theory to analyze the latency and data rate for finite blocklength packets. In \cite{paper5}, Gursoy discussed the statistical framework of effective capacity given in bits per channel use (bpcu) of one node in Rayleigh block fading environment where the channel coefficients are constant through one block transmission time. The EC was defined as a function of error probability and delay quality of service (QoS) exponent. However, they did not present a closed form expression for the effective capacity in their work. Musavian et al. analyzed the EC maximization of a cognitive network in \cite{paper9} and investigated the EC maximization subject to effective energy efficiency constraint in \cite{paper8}. The per-node EC in massive MTC networks was studied in \cite{eucnc} proposing three methods to alleviate interference namely power control, graceful degradation of delay constraint and the hybrid method which is based on the first two.

Herein, we analyze the EC of short packet transmission in quasi-static Rayleigh fading channels under delay exponent limit. The motivation of this work is to provide a mathematical framework for the performance analysis of ultra reliable low latency MTC. Moreover, we aim at characterizing the power delay relation and how to allocate power efficiently in the high SNR regime. The contributions of this work are summarized as follows: \textit{i)} a closed-form approximation for the EC is obtained in terms of incomplete gamma function, which facilitates the derivation of the optimum error probability that maximizes the EC, therefore the maximal transmission rate; \textit{ii)} we derive an upper bound for EC for fixed delay exponent and an upper bound of delay exponent for fixed EC; \textit{iii)} we characterize the amount of power required to support a certain EC while meeting its delay requirement; and \textit{iv)} we propose a power saving scheme which determines the optimum power allocation at the high SNR regime.

The rest of the paper is organized as follows: in Section \ref {pre}, we introduce the system model and some definitions. Next, we obtain a tight closed form approximation for the EC in quasi-static Rayleigh fading environment and characterize the optimum error probability in Section \ref{EC_FB}. In Section \ref{power_delay}, we discuss the power-delay trade off and propose an optimum power allocation scheme in the high SNR regime. The results are depicted in Section \ref{results}. Finally, Section \ref{conclusion} concludes the paper.
\vspace{-2mm}
\section{Preliminaries} \label{pre}

\subsection{System Layout}
We consider a point to point transmission in which short packets are transmitted through a quasi-static fading channel, and full channel state information (CSI) is assumed, thus allowing rate adaption as in \cite{paper5}. Also, the fading is considered to be Rayleigh distributed where the coefficients remain constant over $n$ symbols spanning the whole packet duration. Note that the fading coefficients $Z=|h|^2$ are exponentially distributed, thus $f_Z(z)=e^{-z}$. In short packet transmission, packets are conveyed at a rate that is not only a function of the SNR, but also the blocklength $n$, and the probability of error $\epsilon \in\left[ 0,1\right]$ \cite{paper2}, as illustrated in Fig.~\ref{fig:syst-model}. In this case, $\epsilon$ has a small value but not vanishing. In Fig. 1, a packet buffer is the memory space which stores packets awaiting transmission over the network. Packets are stored temporarily and then transmitted in a FIFO process. Thus, transmission delay occurs for packets while waiting in the buffer. Assume that each symbol is transmitted with SNR $\rho$ in a channel whose fading coefficients are denoted by $h$ and the noise is zero-mean AWGN. Thus, the normalized achievable rate in bpcu is approximated by\footnote{The approximation is accurate for blocklength $n\ge 100$ as demonstrated in \cite[Figs. 12 and 13]{Polyanskiy2010} for AWGN channel, and in \cite{Yang2014c} for fading channels.}	 
\begin{align}\label{eq3}
\begin{split}
r\approx C\left( \rho|h|^2\right) -\sqrt{\tfrac{V(\rho |h|^2)}{n} }  \operatorname{Q}^{-1}
(\epsilon), 	
\end{split}
\end{align}
where $C(x)=\log_2(1+x)$ is Shannon's channel capacity, $V(x)=\left(1-\left( 1+x\right)^{-2}\right)\left(  \log_2 e\right) ^2$ denotes the channel dispersion \cite{paper5}, $\operatorname{Q}(x)=\int_{x}^{\infty}\tfrac{1}{\sqrt{2 \pi}}\exp({\tfrac{-t^2}{2}}) dt$ is the Gaussian Q-function. 
\begin{figure}[!t] 
	\centering
	\includegraphics[width=1\columnwidth]{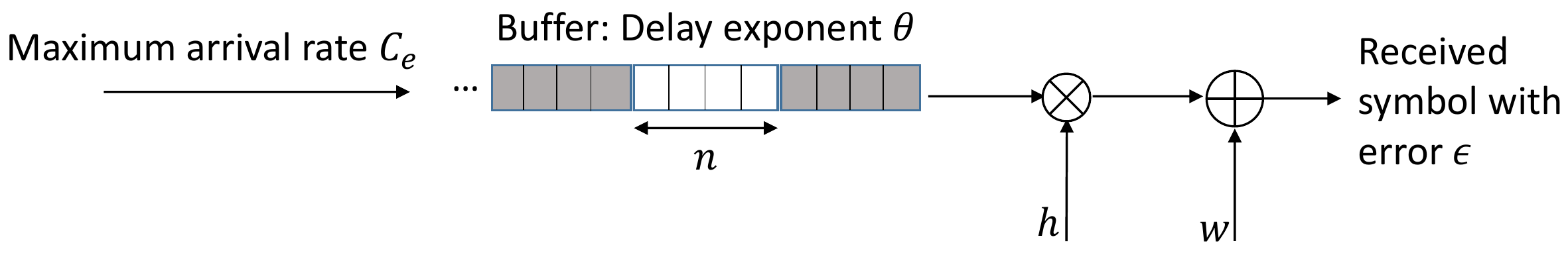}
	\vspace{-2.5mm}
	\caption{Transmission in finite blocklength with delay exponent $\theta$. One block consists of $n$ symbols.}
	\label{fig:syst-model}
	\vspace{-2mm}
\end{figure}
\subsection{Effective Capacity of finite blocklength packets} \label{EC_EEE} 
EC denoted as $C_e$ defines the maximum rate that a communication network can transfer data with, while maintaining certain delay limits in terms of delay outage probability and maximum delay bound $D_{max}$. The delay outage probability is the probability that the transmission delay exceeds the maximum delay bound $D_{max}$ channel uses and hence, an outage occurs. The delay outage probability is given by \cite{paper6} \vspace{-1mm}
\begin{align}\label{delay}
P_{out\_ delay}=\Pr(delay \geq D_{max}) \approx e^{-\theta  C_e  D_{max}},
\end{align}	
where $\Pr(\cdot)$ means the probability of the event between brackets. The delay exponent $\theta$ indicates the system's tolerance to long delays. Small values of $\theta$ mean that the network tolerates longer delays. Conversely, higher $\theta$ values mean that the system is less tolerable to longer delays. In quasi-static fading, the EC in bpcu is given by \cite{paper5}
\begin{align}\label{EC}
C_e(\rho,\theta,\epsilon)=-\frac{\ln\ \mathbb{E}_{Z}\left[\epsilon+(1-\epsilon)e^{-n\theta r}\right]}{n\theta}, 
\end{align} 
where the variable rate $r$ is given in (\ref{eq3}), and $\mathbb{E}_{Z}(\cdot)$ is the expectation of the fading distribution. 

\begin{remark}Contrary to the infinite blocklength model (as in \cite{paper8}), the EC is upper bounded in the finite blocklength regime by \vspace{-2mm}
\begin{align}\label{ub5}
\begin{split}
C_{e_b} =\lim\limits_{\rho\rightarrow \infty} \frac{-\ln\left(\mathbb{E}_{Z}\left[\epsilon+(1-\epsilon)e^{-n\theta r}\right]\right)}{n\theta} =-\frac{\ln(\epsilon) }{n \theta},
\end{split}
\end{align}
where the rate $r$ also tends to $\inf$. It follows that the EC is asymptotic at the high SNR regime towards $-\frac{\ln(\epsilon)}{n \theta}$ which is independent of power. Likewise, there is an upper bound for the delay exponent that can be supported by a buffer with constant EC which is given by 
\begin{align}\label{ub6}
\begin{split}
\theta_b=\lim\limits_{\rho\rightarrow \infty} \frac{-\ln\left(\mathbb{E}_{Z}\left[\epsilon+(1-\epsilon)e^{-n\theta r}\right]\right)}{n \ C_e} =-\frac{\ln(\epsilon) }{n \ C_e}.
\end{split}
\end{align}
\end{remark}

Note that, the bounds in (\ref{ub5}) and (\ref{ub6}) disappear as the blocklength tends to infinity and the error probability vanishes. The  EC is asymptotic in the finite blocklength regime and not monotonically increasing as in the infinite blocklength model.

\subsection{Effective Capacity in quasi-static Rayleigh fading} \label{EC_FB} 
In \cite{paper5}, a stochastic model for EC under finite blocklength coding was introduced, but numerically evaluated. Herein, we propose a tight approximation for the EC in quasi-static Rayleigh fading.
\begin{lemma} \label{lemma 1}
	For a quasi-static Rayleigh fading channel with blocklength $n$, the EC is approximated as
	\begin{align}\label{Rayleigh}
	\begin{split}
	C_e(\rho,\theta,\epsilon)\approx&-\frac{1}{n\theta} \ln \left[ \epsilon+(1-\epsilon) \ \mathcal{J}\right], 
	\end{split}
	\end{align}	
	where 
	\begin{align}\label{c2.2}
	\begin{split}
	\mathcal{J}\!=\!e^{\frac{1}{\rho}}\! \rho^\alpha  \!\left[\!\vphantom{\Gamma\left(\alpha-1,\frac{1}{\rho}\right) }\right. &  \left. \!\left(\kappa 
	\!+\! 1\right)  \!\Gamma\left(\!\alpha+1,\frac{1}{\rho} \right) \right. 
	\left.\!-\frac{\kappa-\frac{\beta}{2} }{\rho^{2}} 
	\Gamma\left(\!\alpha\!-\!1,\frac{1}{\rho}\right)\! \right],  
	\end{split}
	\end{align} 		
and $\kappa = \frac{\beta^2}{2}+\beta$, $\alpha=\frac{-\theta n}{\ln 2}$, $\beta=\theta \sqrt{n} Q^{-1}(\epsilon)\log_2e$, and $\Gamma(\cdot, \cdot)$ is the upper incomplete gamma function \cite[\S 8.350-2]{Gradshteyn}.
\end{lemma}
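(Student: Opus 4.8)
The plan is to evaluate the fading expectation in (\ref{EC}) in closed form after substituting the finite-blocklength rate (\ref{eq3}). First I would use linearity of expectation to reduce $\mathbb{E}_{Z}[\epsilon+(1-\epsilon)e^{-n\theta r}]$ to $\epsilon+(1-\epsilon)\,\mathbb{E}_{Z}[e^{-n\theta r}]$, so the whole task is to compute $\mathbb{E}_{Z}[e^{-n\theta r}]$ with $Z=|h|^2$ exponential, i.e.\ $f_Z(z)=e^{-z}$. Plugging $C(\rho z)=\ln(1+\rho z)/\ln 2$ and $V(\rho z)=(1-(1+\rho z)^{-2})(\log_2 e)^2$ into (\ref{eq3}), and using the shorthands $\alpha=-n\theta/\ln 2$ and $\beta=\theta\sqrt{n}\,Q^{-1}(\epsilon)\log_2 e$ from the statement, the integrand factorizes as $e^{-n\theta r}=(1+\rho z)^{\alpha}\exp\!\big(\beta\sqrt{1-(1+\rho z)^{-2}}\big)$.

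The obstacle is that $\int_{0}^{\infty}(1+\rho z)^{\alpha}\exp\big(\beta\sqrt{1-(1+\rho z)^{-2}}\big)e^{-z}\,dz$ has no elementary antiderivative because of the square root inside the exponent; this is why the lemma is stated as an approximation, and it is the crux of the proof. To get past it I would linearize in the regime of interest, where $(1+\rho z)^{-2}$ is small: apply $\sqrt{1-x}\approx 1-\tfrac{x}{2}$ to split $\exp\big(\beta\sqrt{1-(1+\rho z)^{-2}}\big)\approx e^{\beta}\exp\big(-\tfrac{\beta}{2}(1+\rho z)^{-2}\big)$, then $e^{y}\approx 1+y$ to get $\exp\big(-\tfrac{\beta}{2}(1+\rho z)^{-2}\big)\approx 1-\tfrac{\beta}{2}(1+\rho z)^{-2}$, and finally the second-order expansion $e^{\beta}\approx 1+\beta+\tfrac{\beta^{2}}{2}$. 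Multiplying these factors, discarding the single term proportional to $\beta^{3}(1+\rho z)^{-2}$, and observing that $\kappa+1=1+\beta+\tfrac{\beta^{2}}{2}$ while $\kappa-\tfrac{\beta}{2}=\tfrac{\beta}{2}+\tfrac{\beta^{2}}{2}$ gather exactly the retained coefficients, one gets $e^{-n\theta r}\approx(\kappa+1)(1+\rho z)^{\alpha}-(\kappa-\tfrac{\beta}{2})(1+\rho z)^{\alpha-2}$.

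It then remains to integrate the two power terms against $e^{-z}$. For a generic exponent $\gamma$, the change of variables $s=(1+\rho z)/\rho$ turns $\int_{0}^{\infty}(1+\rho z)^{\gamma}e^{-z}\,dz$ into $e^{1/\rho}\rho^{\gamma}\int_{1/\rho}^{\infty}s^{\gamma}e^{-s}\,ds=e^{1/\rho}\rho^{\gamma}\,\Gamma(\gamma+1,\tfrac{1}{\rho})$ by the definition of the upper incomplete gamma function. Applying this with $\gamma=\alpha$ and $\gamma=\alpha-2$, pulling out the common factor $e^{1/\rho}\rho^{\alpha}$, and substituting the result back into (\ref{EC}) yields (\ref{Rayleigh}) with $\mathcal{J}$ as in (\ref{c2.2}). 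Beyond this integral identity, the only non-routine point is justifying the truncations: I would note that the $\sqrt{1-x}$ and $e^{y}$ linearizations become exact as $(1+\rho z)^{-2}\to 0$, so the closed form is consistent with the high-SNR asymptote in (\ref{ub5}), and rely on the numerical comparison in Section~\ref{results} to confirm that the expression stays tight across the SNR range relevant to ultra-reliable MTC.
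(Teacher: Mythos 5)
Your proposal is correct and follows essentially the same route as the paper's Appendix~A: both reduce the problem to $\mathbb{E}_Z[e^{-n\theta r}]$ with $e^{-n\theta r}=(1+\rho z)^{\alpha}e^{\beta\gamma}$, truncate to second order in $\beta$ and first order in $(1+\rho z)^{-2}$ (the paper expands $e^{\beta\gamma}$ as $1+\beta\gamma+\tfrac{(\beta\gamma)^2}{2}$ and then linearizes $\gamma$, while you linearize the square root first and then expand the exponentials --- the retained terms and the coefficients $\kappa+1$ and $\kappa-\tfrac{\beta}{2}$ come out identically), and evaluate the resulting power-law integrals via the same substitution giving $e^{1/\rho}\rho^{\gamma}\Gamma(\gamma+1,\tfrac{1}{\rho})$. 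No gaps beyond the heuristic nature of the truncation, which the paper likewise justifies only by the regime of interest and numerical validation.
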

\begin{proof}
Please refer to Appendix A.
\end{proof}

\begin{lemma} \label{lemma 2}
	There is a unique global maximizer for the EC in the error probability $\epsilon$ in quasi-static Rayleigh fading channels which is given by 
\begin{align}\label{e*}
\begin{split}
\epsilon^*(\rho,\alpha,\beta)\approx\arg\min_{0 \leq \epsilon \leq 1} \  \epsilon+(1-\epsilon) \ \mathcal{J}.
\end{split}
\end{align} 
	%
\end{lemma}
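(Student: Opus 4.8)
The plan is to convert the maximization of $C_e$ into a minimization with a fixed, power-free objective and then to establish uniqueness of the minimizer. Since $n\theta>0$, the map $x\mapsto-\tfrac{1}{n\theta}\ln x$ is strictly decreasing on $(0,\infty)$, so by Lemma~\ref{lemma 1} maximizing $C_e(\rho,\theta,\epsilon)$ over $\epsilon\in[0,1]$ is the same as minimizing
\begin{align*}
g(\epsilon):=\epsilon+(1-\epsilon)\,\mathcal{J}(\epsilon)=\mathbb{E}_Z\!\big[\epsilon+(1-\epsilon)e^{-n\theta r}\big],
\end{align*}
whose dependence on $\epsilon$ enters only through $Q^{-1}(\epsilon)$ (equivalently through $\beta$ and $\kappa=\beta^2/2+\beta$ in \eqref{c2.2}). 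This already yields the $\arg\min$ characterization in \eqref{e*}; it then remains to show the minimizer exists, is interior, and is unique.

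For existence and interiority I would examine the endpoints. At $\epsilon=1$, $g(1)=1$. As $\epsilon\to0^+$, $Q^{-1}(\epsilon)\to+\infty$, so $e^{-n\theta r}\to+\infty$ for every channel realization and $g(0^+)=+\infty$ (consistently, $\beta,\kappa\to+\infty$ makes the $\kappa$-proportional part of \eqref{c2.2} diverge). Moreover $g$ is continuous on $(0,1]$, and for $\epsilon>\tfrac12$ one has $r>0$ surely, hence $\mathcal{J}(\epsilon)<1$ and $g(\epsilon)<1=g(1)$. Therefore the infimum of $g$ over $[0,1]$ is attained at some $\epsilon^\ast\in(0,1)$, where the stationarity condition
\begin{align*}
g'(\epsilon^\ast)=1-\mathcal{J}(\epsilon^\ast)+(1-\epsilon^\ast)\,\mathcal{J}'(\epsilon^\ast)=0
\end{align*}
holds.

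For uniqueness the idea is to show $g$ is strictly convex, so that $g'$ is strictly increasing and vanishes at most once. Writing $\mathcal{J}(\epsilon)=\mathbb{E}_Z\!\big[c(Z)\,e^{\lambda(Z)Q^{-1}(\epsilon)}\big]$ with $c(z)=e^{-n\theta C(\rho z)}>0$ and $\lambda(z)=\theta\sqrt{nV(\rho z)}>0$, and using $\tfrac{d}{d\epsilon}Q^{-1}(\epsilon)=-1/\varphi\!\big(Q^{-1}(\epsilon)\big)$ with $\varphi(u)=\tfrac{1}{\sqrt{2\pi}}e^{-u^2/2}$, a direct computation gives $\mathcal{J}'<0$ and
\begin{align*}
\frac{d^{2}}{d\epsilon^{2}}e^{\lambda Q^{-1}(\epsilon)}=\frac{\lambda\big(\lambda+Q^{-1}(\epsilon)\big)}{\varphi\!\big(Q^{-1}(\epsilon)\big)^{2}}\,e^{\lambda Q^{-1}(\epsilon)},
\end{align*}
which is nonnegative whenever $Q^{-1}(\epsilon)\ge-\lambda$, hence for all $\epsilon\le\tfrac12$ and more generally throughout the ultra-reliable range. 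In that regime $\mathcal{J}''\ge0$, so $g''=-2\mathcal{J}'+(1-\epsilon)\mathcal{J}''>0$; combined with the endpoint analysis this yields a unique global minimizer, i.e.\ the asserted $\epsilon^\ast$. Alternatively, one may exploit that $\mathcal{J}$ is exactly quadratic in $\beta$ (both $\kappa+1$ and $\kappa-\beta/2$ are quadratic in $\beta$) and analyze $g'$ directly.

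The main obstacle is controlling $g$ on the whole interval $[0,1]$ rather than only on the ultra-reliable regime: for $\epsilon$ near or above $\tfrac12$ the kernel $e^{\lambda Q^{-1}(\epsilon)}$ can turn concave, so strict convexity of $g$ may fail and one must instead rule out additional stationary points there --- for instance by showing $g'>0$ on $\big(\epsilon^\ast,1\big)$ using $\mathcal{J}<1$ together with a bound on $(1-\epsilon)|\mathcal{J}'|$, or by simply restricting the claim to the operating regime of interest. A secondary point is that $\mathcal{J}$ is itself the approximation from Lemma~\ref{lemma 1}, so \eqref{e*} solves the stationarity condition only approximately, which is exactly what the symbol ``$\approx$'' in the statement records.
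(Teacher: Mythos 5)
Your reduction step --- using that $x\mapsto-\ln(x)/(n\theta)$ is strictly decreasing to convert the maximization of $C_e$ into the minimization of $g(\epsilon)=\epsilon+(1-\epsilon)\mathcal{J}$ --- is exactly the paper's argument, and your remark that the $\epsilon$-dependence enters only through $Q^{-1}(\epsilon)$ (i.e.\ through $\beta$ and $\kappa$) matches the paper's observation about \eqref{c2.2}. Where you diverge is on uniqueness: the paper disposes of it in one line by citing \cite{paper5}, where the expectation \eqref{EC2} is shown to be convex in $\epsilon$ for an arbitrary fading distribution, whereas you attempt a self-contained proof via endpoint behavior plus strict convexity of the integrand. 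Your calculus is sound --- the second derivative $\lambda\bigl(\lambda+Q^{-1}(\epsilon)\bigr)\varphi\bigl(Q^{-1}(\epsilon)\bigr)^{-2}e^{\lambda Q^{-1}(\epsilon)}$ is correct, and the endpoint analysis ($g(0^+)=+\infty$, $g(\epsilon)<1=g(1)$ for $\epsilon>\tfrac12$) correctly places the minimizer in the open interval, which is the substantive content of the lemma (transmit with nonzero error probability).

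The genuine gap is the one you flag yourself: pointwise convexity of the kernel requires $Q^{-1}(\epsilon)\ge-\lambda(z)$, and this fails for $\epsilon$ near $1$ together with small $\lambda(z)$ (i.e.\ $z$ near $0$, where $V(\rho z)\to 0$), so your argument does not rule out additional stationary points of $g$ on $(\tfrac12,1)$ and hence does not deliver the claimed \emph{global} uniqueness on $[0,1]$. As written, your proof establishes existence and interiority of a minimizer, and uniqueness only on the ultra-reliable subinterval. To close the lemma as stated you would need either the missing monotonicity bound of $g'$ near $\epsilon=1$, an explicit restriction of the claim to the operating regime, or --- as the paper does --- an appeal to the distribution-free convexity result of \cite{paper5}, which sidesteps the pointwise-convexity issue entirely.
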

\begin{proof}
	The expectation given by (\ref{EC2}) was proven to be convex in $\epsilon$ in \cite{paper5} for any distribution of the channel coefficients. Note that $\mathcal{J}$ is a function of $\epsilon$ as indicated in Lemma 1, in (7) and the auxiliary variable $\beta$. Therefore, the unique maximizer of the EC is the same as the minimizer of (\ref{EC2}) given by $\epsilon^*$ in (\ref{e*}). 
\end{proof}
Notice that $\epsilon^*$ from Lemma \ref{lemma 2} is attained via exhaustive search (or numerical solvers available in Matlab). Then the maximum effective capacity $C_{e_{max}}$ is obtained by plugging the solution of (\ref{e*}) into (\ref{Rayleigh}). Thus, the importance of Lemma 2 is to assure that for delay limited applications, in order to maximize the rate subject to a certain delay constraint, it is optimal to transmit with a non-zero error probability.

\section{Asymptotic analysis} \label{power_delay}
Due to the asymptotic nature of EC in the finite blocklength regime which is concluded in (\ref{ub5}), there is a slight gain in EC when increasing transmit power in the high SNR region. In this section, we aim at observing the delay bound characterized in (\ref{ub6}) for fixed EC and exploit the asymptotic behavior of EC in the high SNR regime in order to save power. For this purpose, from Lemma \ref{lemma 1} we propose a tight approximation for the power delay profile at high SNR as follows:
\begin{lemma} \label{lemma 3} At high SNR ($\rho \rightarrow \infty$), the required SNR to achieve an EC of $C_e$ for a certain delay constraint $\theta$ is approximated as
\vspace{-0mm}
\begin{align}\label{pd0}
\begin{split}
\rho(C_e,\theta,\epsilon)\approx\frac{1}{\mathcal{W}\left(\frac{\mathcal{G}}{\mathcal{F}} \right)},
\end{split}
\end{align}
where $\mathcal{G}=\frac{e^{-n \theta C_e}-\epsilon}{1-\epsilon}$, $\mathcal{F}=\frac{\kappa-\frac{\beta}{2}}{\alpha-1}-\frac{\kappa+1}{\alpha+1}$, and $\mathcal{W}(.)$ is the Lambert-W function \cite{Corless}.

\end{lemma}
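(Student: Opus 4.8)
The plan is to start from the closed-form EC approximation in Lemma~\ref{lemma 1} and invert the relation between $C_e$ and $\rho$ in the large-SNR limit. Writing \eqref{Rayleigh} as $e^{-n\theta C_e} = \epsilon + (1-\epsilon)\,\mathcal{J}$ and solving for $\mathcal{J}$ immediately gives $\mathcal{J} = \mathcal{G}$ with $\mathcal{G} = \frac{e^{-n\theta C_e}-\epsilon}{1-\epsilon}$, so the whole problem reduces to inverting the map $\rho \mapsto \mathcal{J}(\rho)$ from \eqref{c2.2}. The key observation is that as $\rho \to \infty$ the prefactor $e^{1/\rho}\to 1$, the lower incomplete-gamma arguments $1/\rho \to 0$ so that $\Gamma(\alpha\pm 1, 1/\rho)$ tends to the complete $\Gamma(\alpha\pm 1)$ (this needs $\alpha$ in the region where these are finite, which holds since $\alpha<0$ and the relevant integrals converge — I would note the regularity briefly rather than belabor it), and after collecting the $\rho$-dependent terms the bracket in \eqref{c2.2} behaves like a leading constant plus a term of order $1/\rho$. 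The main technical step is therefore a careful first-order expansion of $\mathcal{J}$ in $1/\rho$.

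Concretely, I would expand each piece: $e^{1/\rho} = 1 + \tfrac{1}{\rho} + O(\rho^{-2})$, use the recurrence $\Gamma(s+1,x) = s\,\Gamma(s,x) + x^{s}e^{-x}$ (or the series for the lower incomplete gamma near $0$) to write $\Gamma(\alpha+1,1/\rho)$ and $\Gamma(\alpha-1,1/\rho)$ in terms of a common $\Gamma(\cdot)$ plus small correction terms, and track the explicit $\rho^{\alpha}$, $\rho^{-2}$ factors multiplying them. The point is that the $\rho^{\alpha}$ and $\rho^{-2}\rho^{\alpha}$ factors, combined with the $x^{s}e^{-x}$ remainder terms from the gamma recurrence (which themselves carry factors $\rho^{-\alpha}$), conspire so that the dominant surviving behavior of $\mathcal{J}$ is of the form $1 + \frac{\mathcal{F}}{\rho}\,e^{-\text{something}}$ — more precisely I expect $\mathcal{J} \approx e^{\mathcal{F}/\rho}$ to first order, with $\mathcal{F} = \frac{\kappa - \beta/2}{\alpha-1} - \frac{\kappa+1}{\alpha+1}$ emerging as the coefficient of $1/\rho$ after the complete-gamma terms are simplified via $\Gamma(\alpha+1) = \alpha(\alpha-1)\Gamma(\alpha-1)$ and the common factor is normalized. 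Recognizing the exponential form (rather than just a linear approximation $1 + \mathcal{F}/\rho$) is what makes the Lambert-$W$ appear and is the cleanest way to present it.

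Given $\mathcal{J} \approx e^{\mathcal{F}/\rho} = \mathcal{G}$, I would take logarithms to get $\tfrac{\mathcal{F}}{\rho} = \ln \mathcal{G}$; but to land exactly on the stated form $\rho \approx 1/\mathcal{W}(\mathcal{G}/\mathcal{F})$ I instead keep the relation in the shape $\tfrac{1}{\rho}\,e^{1/\rho \cdot(\cdot)} = (\cdot)$ — that is, retain one more factor so the equation reads $\mathcal{G}/\mathcal{F} = \tfrac{1}{\rho} e^{1/\rho}$, which is precisely the defining equation $x e^{x} = y$ of the Lambert-$W$ function with $x = 1/\rho$ and $y = \mathcal{G}/\mathcal{F}$. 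Solving gives $1/\rho = \mathcal{W}(\mathcal{G}/\mathcal{F})$, hence \eqref{pd0}. I would close by remarking that the approximation is consistent with Remark~1: as $\rho\to\infty$ we have $1/\rho\to 0$, $\mathcal{W}(\mathcal{G}/\mathcal{F})\to 0$, forcing $\mathcal{G}\to 0$, i.e. $e^{-n\theta C_e}\to\epsilon$, recovering the ceiling $C_e \to -\ln(\epsilon)/(n\theta)$.

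The main obstacle is the bookkeeping in the asymptotic expansion of $\mathcal{J}$: one must be careful that the $\rho^{\alpha}$ factor (with $\alpha<0$, so this is a decaying power) multiplying incomplete-gamma functions whose near-zero behavior also contributes $\rho^{-\alpha}$-type terms is handled consistently, so that the genuinely dominant contribution is correctly identified as the constant $1$ plus an $O(1/\rho)$ term with coefficient $\mathcal{F}$, and that no cross term of comparable or larger order is dropped. Getting this matching right — and in particular justifying the passage from the polynomial correction to the exponential $e^{\mathcal{F}/\rho}$ form that yields the Lambert-$W$ — is the crux; the remaining algebra (log, rearrangement into $xe^x$ form) is routine.
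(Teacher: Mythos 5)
Your overall strategy --- set $\mathcal{J}=\mathcal{G}$ by inverting \eqref{Rayleigh} and then invert the map $\rho\mapsto\mathcal{J}(\rho)$ at large SNR --- is the same as the paper's, but the asymptotic you propose for $\mathcal{J}$ is wrong, and the error propagates to the point where your final Lambert-$W$ equation is asserted rather than derived. Since $\alpha=-\theta n/\ln 2<0$, the \emph{upper} incomplete gamma functions $\Gamma(\alpha\pm1,1/\rho)$ do not in general tend to the complete $\Gamma(\alpha\pm1)$ as $\rho\to\infty$: for a first parameter $a<0$ one has $\Gamma(a,x)\sim -x^{a}/a$ as $x\to0^{+}$, which diverges. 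The paper exploits exactly this, writing $\Gamma(a,1/\rho)=\rho^{-a}E_{1-a}(1/\rho)\approx-\rho^{-a}/a$, so that after the $\rho^{\alpha}$ and $\rho^{-2}$ prefactors are accounted for, \emph{every} term in the bracket of \eqref{c2.2} is $O(1/\rho)$ and
\begin{align*}
\mathcal{J}\approx e^{\frac{1}{\rho}}\,\frac{1}{\rho}\left[\frac{\kappa-\frac{\beta}{2}}{\alpha-1}-\frac{\kappa+1}{\alpha+1}\right]=\frac{\mathcal{F}}{\rho}\,e^{\frac{1}{\rho}}.
\end{align*}
There is no leading constant $1$: $\mathcal{J}\to0$ as $\rho\to\infty$, which is forced by Remark 1 ($\psi\to\epsilon$, hence $\mathcal{G}\to0$). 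Your claimed form $\mathcal{J}\approx 1+\mathcal{F}/\rho\approx e^{\mathcal{F}/\rho}$ therefore cannot be right, and as you yourself notice it would give $\rho\approx\mathcal{F}/\ln\mathcal{G}$ after taking logs, not the stated Lambert-$W$ expression.

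The step where you ``instead keep the relation in the shape $\mathcal{G}/\mathcal{F}=\tfrac{1}{\rho}e^{1/\rho}$'' lands on the correct final equation, but in your write-up it does not follow from anything: it is inconsistent with your own intermediate claim $\mathcal{J}\approx e^{\mathcal{F}/\rho}$, since the two expressions differ already at leading order. With the correct asymptotic $\mathcal{J}\approx\frac{\mathcal{F}}{\rho}e^{1/\rho}$, combined with $\mathcal{J}=\mathcal{G}$ from \eqref{Rayleigh}, the identification $x=1/\rho$, $xe^{x}=\mathcal{G}/\mathcal{F}$ is immediate and gives \eqref{pd0}. So the missing idea is the small-argument behaviour of the upper incomplete gamma function with negative first parameter (equivalently, the generalized exponential integral limit the paper cites); once that is in place, the remaining rearrangement into the defining equation of $\mathcal{W}$ is routine, as you say.
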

\begin{proof} Please refer to Appendix B.
\end{proof}

Note that (\ref{pd0}) gives the amount of power needed for a certain buffer to support a fixed EC in the high SNR regime. In order to reduce power consumption and guarantee negligibly degraded EC, we maximize the EC subject to a constraint on the rate of change of EC with respect to SNR. This can be formulated as the following minimization problem with a maximum power constraint $\rho_{max}$:
\begin{problem} \label{problem 1}

\begin{subequations}
	\begin{align}
	\min_{\rho \geq 0} \ &\psi(\rho,\theta,\epsilon) =\epsilon+(1-\epsilon) \mathcal{J}, \\
	\mathrm{s.t} \ &\rho \leq \rho_{max}, \label{b} \\
	&\frac{\partial C_e}{\partial \rho}\geq \mu. \label{c}
	\end{align}
\end{subequations}	
\end{problem}
Note that $\mu$ is the power saving factor and the minimum acceptable rate of change of the EC with respect to power. The constraint (10c) guarantees that if transmit power is raised, there will be a reasonable gain in EC. Moreover, $\mu$ is chosen to be a positive number close to zero so that the constraint is not extremely tight and the loss in EC is negligible. For instance, for a noise power of 1 mW, $\mu=10^{-2}$ means that doubling the transmit power results in only $10^{-2}$ bpcu increase in the EC.
\begin{theorem} \label{theorem 2}
At high SNR, the solution of Problem \ref{problem 1} admits an optimum power allocation policy given by
\begin{align}\label{pd4}
\begin{split}
\rho^*=\min \left\lbrace \frac{2}{\sqrt{1+4\sqrt{\frac{\mu n \theta \epsilon}{ (1-\epsilon)\mathcal{F}}}}-1},\rho_{max}\right\rbrace, 
\end{split}
\end{align}
where $\mathcal{F}$ is defined in Lemma \ref{lemma 3}.
\end{theorem}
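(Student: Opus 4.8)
The plan is to reduce Problem~\ref{problem 1} to a one-dimensional root-finding problem and then solve that root in closed form using the high-SNR machinery already developed for Lemma~\ref{lemma 3}. Since $C_e=-\frac{1}{n\theta}\ln\psi$ and $t\mapsto-\frac{1}{n\theta}\ln t$ is strictly decreasing, minimizing $\psi$ over $\rho$ is the same as maximizing $C_e$ over $\rho$. At high SNR the relation underlying Lemma~\ref{lemma 3} (equivalently, Taylor-expanding the incomplete gamma functions in (\ref{c2.2}) about $1/\rho=0$) gives $\mathcal{J}\approx\mathcal{F}\,e^{1/\rho}/\rho$, so $\psi\approx\epsilon+(1-\epsilon)\,\mathcal{F}\,e^{1/\rho}/\rho$, which is strictly decreasing in $\rho$ whenever $\mathcal{F}>0$ (the regime in which this approximation yields a legitimate expectation). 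Hence the objective by itself drives $\rho$ upward, and the optimizer sits at whichever of the two upper limits on $\rho$ binds first: the power budget $\rho_{max}$ of (\ref{b}), or the largest SNR $\rho_\mu$ compatible with the slope constraint (\ref{c}). It therefore remains only to compute $\rho_\mu$.

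To do this I would differentiate $C_e=-\frac{1}{n\theta}\ln\psi$, obtaining $\frac{\partial C_e}{\partial\rho}=-\frac{1}{n\theta\,\psi}\frac{\partial\psi}{\partial\rho}$, and use $\frac{\partial}{\partial\rho}\big(e^{1/\rho}/\rho\big)=-e^{1/\rho}(\rho+1)/\rho^{3}$ together with the high-SNR simplifications $\psi\approx\epsilon$ and $e^{1/\rho}\approx1+\tfrac1\rho$. These collapse the derivative to
\begin{equation}\label{eq:slopehi}
\frac{\partial C_e}{\partial\rho}\approx\frac{(1-\epsilon)\,\mathcal{F}\,(\rho+1)^2}{n\,\theta\,\epsilon\,\rho^{4}} .
\end{equation}
A one-line derivative check shows $(\rho+1)^2/\rho^{4}$ is decreasing in $\rho$, so $\partial C_e/\partial\rho$ is monotone decreasing in the high-SNR region; consequently constraint (\ref{c}) is equivalent there to $\rho\le\rho_\mu$, with $\rho_\mu$ the unique positive solution of $\frac{\partial C_e}{\partial\rho}=\mu$.

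Finally I would solve that equation. Equating the right side of (\ref{eq:slopehi}) to $\mu$ gives $(\rho+1)^2/\rho^4=\mu n\theta\epsilon/\big[(1-\epsilon)\mathcal{F}\big]$; taking the positive square root turns it into the quadratic $a\rho^2-\rho-1=0$ with $a=\sqrt{\mu n\theta\epsilon/\big[(1-\epsilon)\mathcal{F}\big]}$, whose only positive root is $\rho_\mu=\frac{1+\sqrt{1+4a}}{2a}$; rationalizing the denominator rewrites this as $\rho_\mu=\frac{2}{\sqrt{1+4a}-1}$, which is exactly (\ref{pd4}) once combined with the budget through $\rho^*=\min\{\rho_\mu,\rho_{max}\}$.

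The delicate part is bookkeeping the nested high-SNR expansions consistently: the $O(1/\rho)$ correction carried by $e^{1/\rho}$ and by $\frac{\partial}{\partial\rho}\big(e^{1/\rho}/\rho\big)$ must be retained — it produces the $(\rho+1)^2$ factor that makes $\rho_\mu$ a root of a quadratic rather than a pure power of $\rho$ — while the analogous correction inside $\psi$ itself is dropped; justifying that asymmetry and checking a posteriori that the resulting $\rho_\mu$ falls in the high-SNR range where every expansion (and the sign condition $\mathcal{F}>0$) is valid is the main obstacle. When $\mu$ is too large for (\ref{c}) to hold at any high-SNR point, the formula is vacuous and the optimizer lies in a moderate-SNR regime outside this analysis; the theorem is stated for the regime where $\rho_\mu$ is large.
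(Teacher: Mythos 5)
Your proposal is correct and follows essentially the same route as the paper's Appendix C: the same high-SNR reduction $\mathcal{J}\approx\mathcal{F}e^{1/\rho}/\rho$, the same derivative computation with $e^{1/\rho}\approx 1+1/\rho$ and $\psi\to\epsilon$, and the same quadratic $a\rho^2-\rho-1=0$ yielding $\rho_\mu=2/(\sqrt{1+4a}-1)$ combined with $\rho_{max}$ via a minimum. Your added monotonicity arguments (that $\psi$ is decreasing in $\rho$ and that the slope constraint is equivalent to $\rho\le\rho_\mu$) merely make explicit what the paper asserts when it says the optimum is attained at equality of the constraint.
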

\begin{proof} Please refer to Appendix C.
\end{proof}
The saved power in dB is given by 
\vspace{-2mm}
\begin{align}\label{pd11}
\begin{split}
\eta=\rho_{max}-\rho^*.
\end{split}
\end{align}
\section{Numerical Results} \label{results}
In Fig. \ref{new_formula}, we present the EC in quasi-static Rayleigh fading channel when varying delay exponents. The plots apply the expectation in (\ref{EC}) and Lemma \ref{lemma 1}. The system parameters are $n=500$ channel uses, $\epsilon=10^{-4}$. The figure corroborates the accuracy of Lemma \ref{lemma 1} specially in the high SNR regime of interest with and the error of $0.001\%$ where the error is defined as $100\times|(a-b)|/a$, where $a$ is given in \eqref{EC} and b denotes \eqref{Rayleigh}. Furthermore, the figure shows the upper bound of EC in the finite blocklength regime obtained from (\ref{ub5}) and the unbounded EC when applying Shannon's model which assumes infinite blocklength and zero error. This result is expected since in the finite blocklength regime, the rate is not only bounded to the SNR but also to the error probability. However, the performance gap is only significant in the high SNR regime. That is why we concentrated our analysis in Section \ref{power_delay} on the high SNR region while \cite{paper8} provides an alternative for other transmission regions relying on conventional coding.

Fig. \ref{power_delay_profile} depicts the power delay profile for different QoS constraints at high SNR operating with fixed EC. We obtain these plots by applying Lemma \ref{lemma 3}. The figure shows that for a fixed EC value, the consumed power grows exponentially when the delay exponent becomes more strict. Furthermore, we observe the upper bound of the delay exponent that can be supported for each fixed EC value obtained from (\ref{ub6}) which also validates the tightness of Lemma 3. Thus, in order to support higher delay constraints, EC should be suppressed. This implies low rate transmission with ultra low latency which serves the intuition of MTC.
\vspace{-0mm}

In Fig. \ref{power_gain}, we plot the power gain obtained from applying Theorem 1 for $n=500$, $\rho_{max}=20$ and $30$ dB and $\epsilon=10^{-4}$. Note that the red curve represents the no gain line ($\mu=0, \eta=0$ dB). We observe that higher power gains can be achieved when the power saving factor $\mu$ and the delay constraint $\theta$ increase. This occurs due to the fact that rising the transmit power renders limited gain in EC for delay strict networks as pointed in Proposition 1 in \cite{eucnc}. Notice that as the power saving parameter $\mu$ becomes higher, we obtain higher power gain. However, this comes at the cost of greater loss in the EC as pointed out in Fig. \ref{EC_difference}, which shows the EC as function of the delay exponent. 

Fig. \ref{EC_difference} depicts the EC loss due to the power saving obtained from our proposed power allocation strategy. It is obvious that the asymptotic nature of the EC in the finite blocklength regime provides considerable power saving at a very low loss in EC. This can be noticed specially at the high SNR regime where the simulation is performed.

\begin{figure}[t] 
	\begin{center}
		\includegraphics*[width=1\columnwidth]{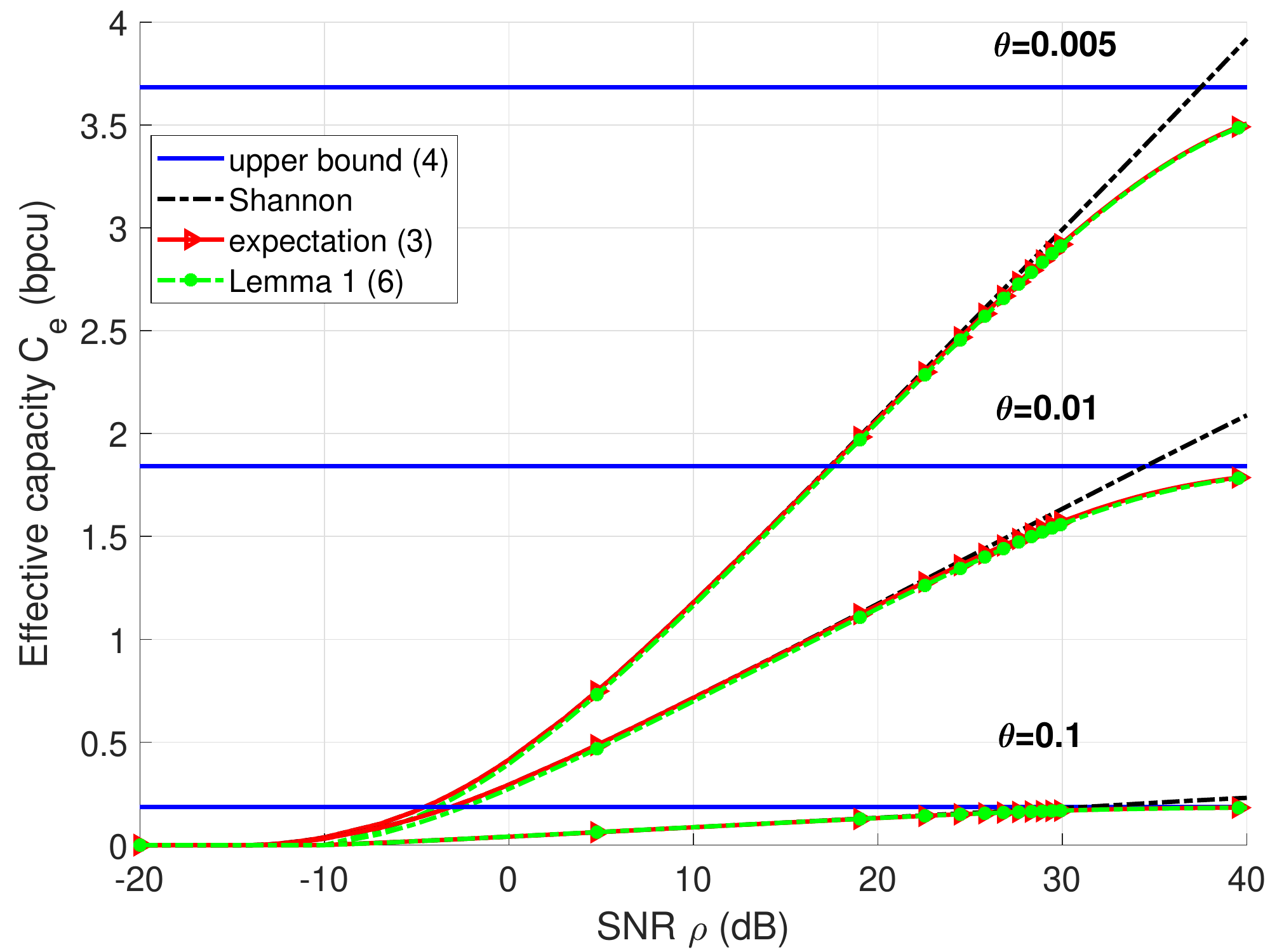}
	\end{center}
	\vspace{-0mm}
	\caption{Effective capacity as a function of SNR in quasi-static Rayleigh fading for $n=500, \epsilon=10^{-4}$.}
	\label{new_formula}
	\vspace{-0mm}
\end{figure}

\begin{figure}[ht] 
	\begin{center}
		\includegraphics*[width=1\columnwidth]{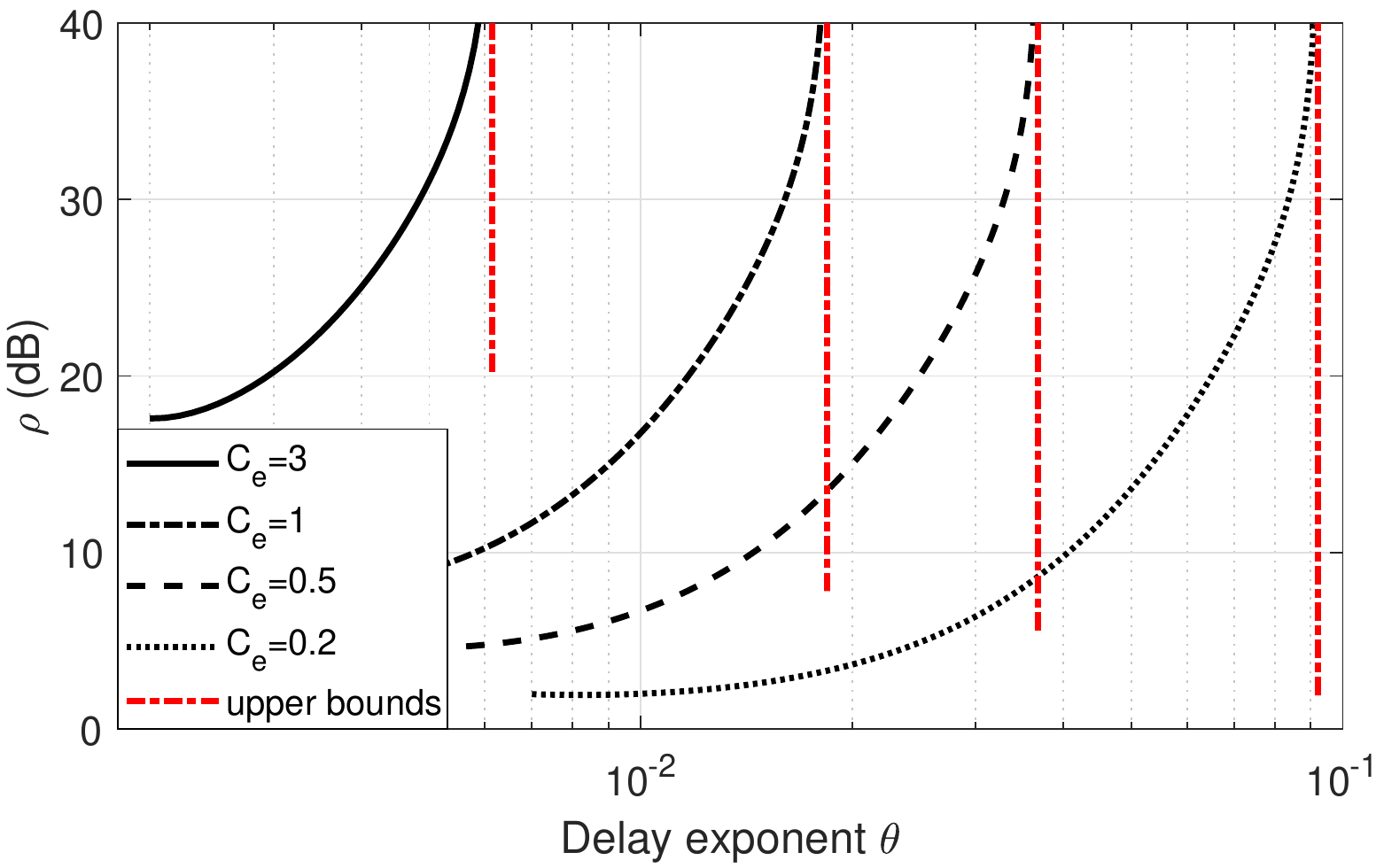}
	\end{center}
	\vspace{-2mm}
	\caption{Power delay profile for fixed EC buffers with $n=500$ and $\epsilon=10^{-4}$.}
	\label{power_delay_profile}
	\vspace{-2mm}
\end{figure}

\vspace{-2mm}
\begin{figure}[ht] 
	\begin{center}
		\includegraphics*[width=1\columnwidth]{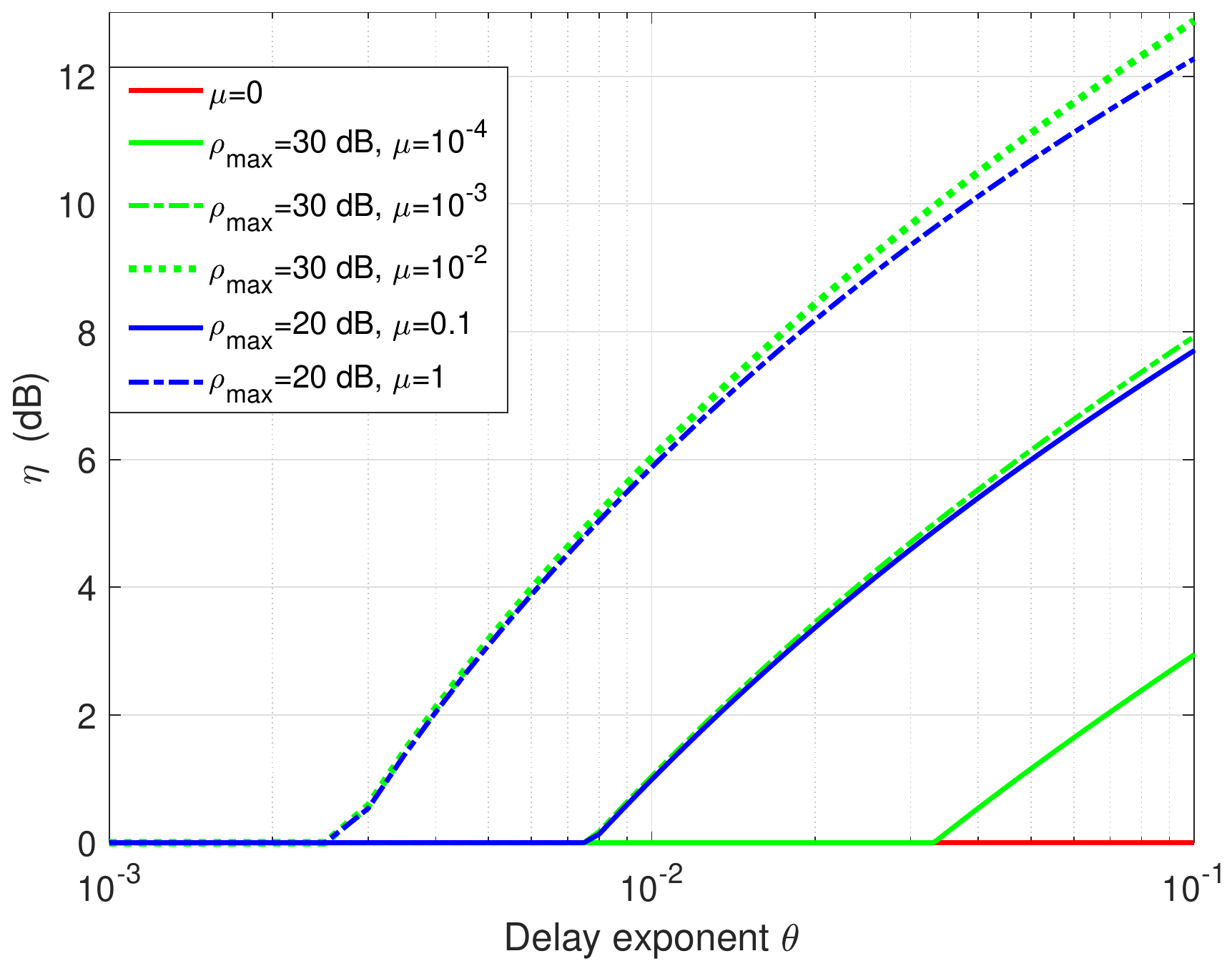}
	\end{center}
	\vspace{-2mm}
	\caption{Power saving vs the delay exponent $\theta$ for $n=500, \epsilon=10^{-4}$, and $\rho_{max}=20, 30$  dB.}
	\label{power_gain}
	\vspace{-2mm}
\end{figure} \vspace{-2mm}
\begin{figure}[ht] 
	\begin{center}
		\includegraphics*[width=1\columnwidth]{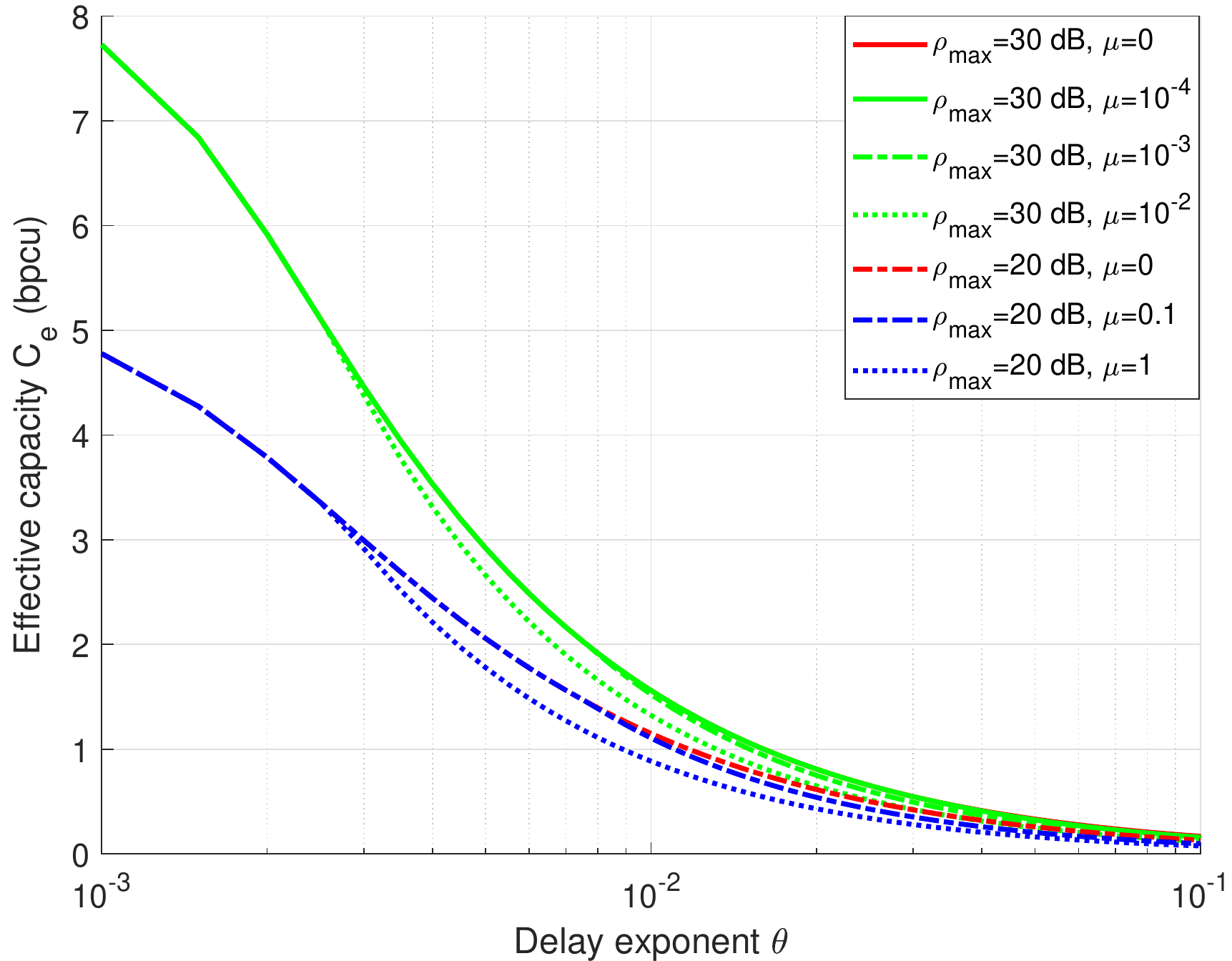}
	\end{center}
	\vspace{-2mm}
	\caption{Effect of power saving in quasi-static Rayleigh fading, where $n=500, \epsilon=10^{-4}$, and $\rho_{max}=20, 30$  dB.}
	\label{EC_difference}
	\vspace{-2mm}
\end{figure}
\vspace{0mm}

\section{Conclusion} \label{conclusion}
In this work, we thoroughly analyzed the effective capacity for short packets transmission in MTC networks. For quasi-static Rayleigh fading channels, we obtained a tight closed form approximation for the EC in terms of well known mathematical functions and characterized the optimum error probability for maximizing the EC. Furthermore, in contrary to the infinity blocklength case, we showed that in the finite blocklength regime, the EC is upper bounded in high SNR, which is a consequence of the short blocklength and error probability imposed by the system design. Finally, we discussed the power-delay relation and proposed an optimum power allocation scheme in the high SNR regime exploiting the asymptotic behavior of EC. The proposed power allocation strategy leads to considerable power saving at a very low loss in EC.

\vspace{-2mm}
\appendices 
\section{Proof Of Lemma \ref{lemma 1}}
Let us first define 
	\begin{align}\label{EC2}
	\psi(\rho,\theta,\epsilon)&= \mathbb{E}_{Z}\left[\epsilon+(1-\epsilon)e^{-n\theta r}\right] \\
	&=\int_{0}^{\infty}
	\left( \epsilon+(1-\epsilon)e^{-\theta n r}\right)  e^{-z} dz. 	
	\end{align}
	From \eqref{eq3} and \cite{eucnc}, we have		
	\begin{align}\label{e1}
	e^{-\theta n r}=(1+\rho z)^{\alpha} e^{\beta \gamma}, 
	\end{align}	
	where $\gamma=\sqrt{(1-(1+\rho z)^{-2})}$. Since $\gamma$ tends to 1 as $\rho \rightarrow \infty$ and $\beta \rightarrow 0$ for the cases under interest (namely, ultra reliable scenarios with finite block length, where $n$ is small and $\epsilon$ is small but not zero), we can then truncate the Maclaurin series as $e^{\beta \gamma} = 1+(\beta \gamma)+\frac{(\beta \gamma)^2}{2}$ and replace it into (\ref{e1}), then \eqref{EC2} becomes
	\begin{align}\label{general2}
	\begin{split}
	&\psi(\rho,\theta,\epsilon)=  \epsilon 
	+(1-\epsilon)\left[ \int_{0}^{\infty}
	(1+\rho z)^{\alpha}e^{-z} dz + \right. \\
	&\left. \beta\int_{0}^{\infty}\!
	(1+\rho z)^{\alpha} \gamma e^{-z} dz + \frac{\beta^2}{2}\int_{0}^{\infty}
	(1+\rho z)^{\alpha} \gamma^2 e^{-z} dz  \right].  
	\end{split}
	\end{align}
	The first integral reduces to $e^{\frac{1}{\rho}}  \rho^\alpha \Gamma\left(\alpha+1,\frac{1}{\rho} \right)$. After we apply Laurent's expansion for $\gamma$ \cite{Complex_Analysis}, we attain $\gamma\approx1-\tfrac{1}{2} \left(1+\rho z \right)^{-2}$. Then, the second and third integrals can be written as  
	$e^{\frac{1}{\rho}} \beta \rho^\alpha  \left( \Gamma\left(\alpha+1,\tfrac{1}{\rho} \right)-\tfrac{1}{2\rho^{2}}\Gamma\left(\alpha-1,\frac{1}{\rho}\right)\right) $, and $e^{\frac{1}{\rho}} \frac{\beta^2}{2} \rho^\alpha  \left( \Gamma\left(\alpha+1,\frac{1}{\rho} \right)-\tfrac{1}{\rho^{2}} \Gamma\left(\alpha-1,\frac{1}{\rho}\right) \right) $, respectively. After some algebraic manipulations we obtain $\psi=\epsilon+(1-\epsilon)\mathcal{J}$ where $\mathcal{J}$ is given by (\ref{c2.2}). \vspace{-2mm}
\section{Proof Of Lemma \ref{lemma 3}}
According to \cite{wolf}, the gamma function can be represented in terms of the generalized exponential integral $E_{1-a}(\cdot)$ as
\begin{align}\label{pd-0}
\Gamma\left( a,\tfrac{1}{\rho}\right) = \rho^{-a} E_{1-a}\left( \tfrac{1}{\rho}\right) \approx-\frac{\rho^{-a}}{a},
\end{align}
where $\lim\limits_{\rho\rightarrow \infty}E_{1-a}\left( \frac{1}{\rho}\right)\approx-\frac{1}{a}$ \cite[\S 8.19.6]{DLMF}.
For $a=\alpha+1$ and $a=\alpha-1$, it is observed that the approximation in (\ref{pd-0}) holds for $\{a \in \mathbb{R} | a < 1 \}$ which corresponds to all practical values of $\alpha$. Thus, by applying (\ref{pd-0}) into \eqref{c2.2}, we obtain 


\vspace{-0mm}
\begin{align}\label{c2.6}
\begin{split}
\mathcal{J_ \infty}\approx e^{\frac{1}{\rho}} \frac{1}{\rho}\left[-\frac{\kappa+1}{\alpha+1}+\frac{\kappa-\frac{\beta}{2}}{\alpha-1}\right]=e^{\frac{1}{\rho}} \frac{1}{\rho}\mathcal{F}.
\end{split}
\end{align}
From (\ref{Rayleigh}), we isolate $\mathcal{J}$ as
\vspace{-0mm}
\begin{align}\label{pd-1}
\begin{split}
\mathcal{J_\infty}=\frac{e^{-n \theta \ C_e}-\epsilon}{1-\epsilon}.
\end{split}
\end{align}
After manipulating (\ref{c2.6}) and (\ref{pd-1}) we get (\ref{pd0}) with the help of the definition of the Lambert-W function which is defined as the solution to \cite{Corless}
	\begin{align}\label{Lambert}
	f(x)=x e^x, \ where \ \
	x=f^{-1}(x)=\mathcal{W}(x e^x)
	\end{align}  \vspace{-2mm}
\section{Proof Of Theorem \ref{theorem 2}}
As envisioned from Fig. 2, the effective capacity is an increasing function of the transmit SNR. However, at high SNR, the rate of increase of effective capacity decreases due to the asymptotic behaviour which is concluded in Remark 1. Thus, the optimal power allocation is achieved at equality of the constraint given by (10 c). From (\ref{Rayleigh}) and (\ref{c2.6}), we have
\begin{align}\label{pd8}
&\frac{\partial C_e}{\partial \rho}= 
-\frac{(1-\epsilon)}{n \theta \psi} \frac{\partial \mathcal{J}}{\partial \rho} =-\frac{(1-\epsilon)}{n \theta \psi}\mathcal{F}\left(\frac{-1}{\rho^2}e^{\frac{1}{\rho}}-\frac{1}{\rho^3} e^{\frac{1}{\rho}} \right)  \notag \\
&\!=\!\frac{(1-\epsilon)}{n \theta \psi}\frac{\mathcal{F}}{\rho^2}e^{\frac{1}{\rho}}\left(1+\frac{1}{\rho} \right)\stackrel{(l)}{\approx}\frac{(1-\epsilon)}{n \theta \psi}\frac{\mathcal{F}}{\rho^2}\left(1+\frac{1}{\rho} \right)^2\!, 
\end{align}
where in step $(l)$, we applied the first order Taylor expansion of $e^{\frac{1}{\rho}}\approx1+\frac{1}{\rho}$ around zero. According to (\ref{ub5}), as $\rho \rightarrow \infty$, $\psi$ converges to $\epsilon$. At equality of the constraint (\ref{c}), we attain
\begin{align}\label{pd9}
\begin{split}
&\frac{(1-\epsilon)}{n \theta \epsilon}\frac{\mathcal{F}}{\rho^2}\left(1\!+\!\frac{1}{\rho} \right)^2\!=\!\frac{(1-\epsilon)}{n \theta \epsilon}\mathcal{F}\left(\frac{1}{\rho^4}\!+\!\frac{2}{\rho^3}\!+\!\frac{1}{\rho^2} \right)\!=\!\mu,
\end{split}
\end{align}
which leads to 
\vspace{-2mm}
\begin{align}\label{pd10}
\begin{split}
\left(\frac{1}{\rho^4}+\frac{2}{\rho^3}+\frac{1}{\rho^2} \right)=\frac{\mu n \theta \epsilon}{(1-\epsilon)\mathcal{F}}.
\end{split}
\end{align}
Leveraging the constraint (\ref{b}) and (\ref{pd10}), we obtain the positive solution to this problem as in Theorem \ref{theorem 2}.
\vspace{-2mm}

\bibliographystyle{IEEEtran}
\bibliography{di}
\end{document}